\newcommand*{\shifttext}[2]{%
  \settowidth{\@tempdima}{#2}%
  \makebox[\@tempdima]{\hspace*{#1}#2}%
}
\renewcommand*\env@matrix[1][\arraystretch]{%
  \edef\arraystretch{#1}%
  \hskip -\arraycolsep
  \let\@ifnextchar\new@ifnextchar
  \array{*\c@MaxMatrixCols c}}
\newcommand\reallywidehat[1]{%
\savestack{\tmpbox}{\stretchto{%
  \scaleto{%
    \scalerel*[\widthof{\ensuremath{#1}}]{\kern.1pt\mathchar"0362\kern.1pt}%
    {\rule{0ex}{\textheight}}%WIDTH-LIMITED CIRCUMFLEX
  }{\textheight}% 
}{2.4ex}}%
\stackon[-6.9pt]{#1}{\tmpbox}%
}
    \edef\sign{\pgfmathresult}%
    \edef\x{\pgfmathresult}%
    \edef\t{\pgfmathresult}%
    \edef\y{\pgfmathresult}%
\theoremstyle{theorem}
\newtheorem{theorem}[equation]{Theorem}
\newtheorem{lemma}[equation]{Lemma}
\newtheorem{proposition}[equation]{Proposition}
\newtheorem{corollary}[equation]{Corollary}
\theoremstyle{definition}
\newtheorem{definition}[equation]{Definition}
\newtheorem{construction}[equation]{Construction}
\newtheorem{question}[equation]{Question}
\newtheorem{problem}[equation]{Problem}
\newtheorem{example}[equation]{Example}
\newtheorem{exercise}[equation]{Exercise}
\newtheorem*{answer}{Answer}
\newtheorem*{solution}{Solution}
\newtheorem{remark}[equation]{Remark}
\newtheorem{notation}[equation]{Notation}
\newtheorem{noterm}[equation]{Notation and Terminology}
\newcommand\define[1]{\emph{\textbf{#1}}}%italicize and bold-face %this seems like a good alternative
\numberwithin{equation}{section}
 \let\t=\tau
\newcommand{\be}{\begin{equation}}
\newcommand{\ee}{\end{equation}}
\def\ba{\begin{align}} %previously this was ``array''
\def\ea{\end{align}}
\newcommand{\bea}{\begin{eqnarray}}
\newcommand{\eea}{\end{eqnarray}}
\newcommand{\bx}{\begin{example}}
\newcommand{\ex}{\end{example}}
\newcommand{\bex}{\begin{exercise}}
\newcommand{\eex}{\end{exercise}}
\newcommand{\ban}{\begin{answer}}
\newcommand{\ean}{\end{answer}}
\newcommand{\bt}{\begin{theorem}}
\newcommand{\et}{\end{theorem}}
\newcommand{\bc}{\begin{corollary}}
\newcommand{\ec}{\end{corollary}}
\newcommand{\blem}{\begin{lemma}}
\newcommand{\elem}{\end{lemma}}
\newcommand{\bp}{\begin{problem}}
\newcommand{\ep}{\end{problem}}
\newcommand{\bn}{\begin{proposition}}
\newcommand{\en}{\end{proposition}}
\newcommand{\bd}{\begin{definition}}
\newcommand{\ed}{\end{definition}}
\newcommand{\bcon}{\begin{construction}}
\newcommand{\econ}{\end{construction}}
\newcommand{\bq}{\begin{question}}
\newcommand{\eq}{\end{question}}
\newcommand{\bprf}{\begin{proof}}
\newcommand{\eprf}{\end{proof}}
\newcommand{\br}{\begin{remark}}
\newcommand{\er}{\end{remark}}
\newcommand{\bs}{\begin{solution}}
\newcommand{\es}{\end{solution}}
\newcommand{\beqs}{\begin{eqnarray}}
\newcommand{\eeqs}{\end{eqnarray}}
\newcommand{\bnt}{\begin{noterm}}
\newcommand{\ent}{\end{noterm}}
\newcommand{\bnot}{\begin{notation}}
\newcommand{\enot}{\end{notation}}
\newcommand{\sto}{\rightsquigarrow}
\newcommand{\id}{\mathrm{id}}
\def\R{{{\mathbb R}}}
\newcommand{\FinStat}{\mathbf{FinStat}}
\newcommand{\stoch}{\;\xy0;/r.25pc/:(-3,0)*{}="1";(3,0)*{}="2";{\ar@{~>}"1";"2"|(1.06){\hole}};\endxy\!}
\newcounter{sarrow}
\newcommand\xstoch[1]{%
\stepcounter{sarrow}%
\mathrel{\begin{tikzpicture}[baseline= {( $ (current bounding box.south) + (0,-0.1ex) $ )}]
\node[inner sep=.5ex] (\thesarrow) {\;$\scriptstyle #1$\;};
\path[draw,{<[scale=1.5,width=3,length=2]}-,decorate,
  decoration={snake,amplitude=0.3mm,segment length=2.1mm,pre=lineto,pre length=1pt}] 
    (\thesarrow.south east) -- (\thesarrow.south west);
\end{tikzpicture}}%
}
\newcounter{sqarrow}
\newcommand{\ben}{\renewcommand{\theenumi}{\alph{enumi}} 
\renewcommand{\labelenumi}{(\theenumi)}\begin{enumerate}}
\newcommand{\een}{\end{enumerate}}
\newlength\stateheight
\newlength\minimumstatewidth
\tikzset{width/.initial=\minimummorphismwidth}
\tikzset{colour/.initial=white}
\newif\ifblack\pgfkeys{/tikz/black/.is if=black}
\newif\ifwedge\pgfkeys{/tikz/wedge/.is if=wedge}
\newif\ifvflip\pgfkeys{/tikz/vflip/.is if=vflip}
\newif\ifhflip\pgfkeys{/tikz/hflip/.is if=hflip}
\newif\ifhvflip\pgfkeys{/tikz/hvflip/.is if=hvflip}
\def\thickness{0.4pt}
    \gdef\node@@on@layer{%
      \setbox\tikz@tempbox=\hbox\bgroup\pgfonlayer{#1}\unhbox\tikz@tempbox\endpgfonlayer\pgfsetlinewidth{\thickness}\egroup}
\def\node@on@layer{\aftergroup\node@@on@layer}
    \pgfmathsetlength\pgf@xa{.5\pgf@xa+.5\pgf@xb}
    \pgfmathsetlength\pgf@yc{.16666\pgf@yb-.16666\pgf@ya}
\tikzset{inline text/.style =
  {text height=1.2ex,text depth=0.25ex,yshift=0.5mm}}
\tikzset{arrow box/.style =
  {rectangle,inline text,fill=white,draw,
    minimum height=5mm,yshift=-0.5mm,minimum width=5mm}}
\tikzset{bubble/.style =
  {inner sep=0mm,minimum width=3mm,minimum height=3mm,
    draw,shape=circle,fill=white}}
\tikzset{dot/.style =
  {inner sep=0mm,minimum width=1mm,minimum height=1mm,
    draw,shape=circle}}
\tikzset{white dot/.style = {dot,fill=white,text depth=-0.2mm}}
\tikzset{scalar/.style = {diamond,draw,inner sep=1pt}}
\tikzset{square/.style =
  {inner sep=0mm,minimum width=2mm,minimum height=2mm,
    draw,shape=rectangle}}%%me trying to make a fakecopier
\tikzset{star/.style = {dot,fill=white,text depth=-0.2mm}}
\tikzset{copier/.style = {dot,fill,text depth=-0.2mm}}
\tikzset{fakecopier/.style = {square,fill,text depth=-0.2mm}}
\tikzset{discarder/.style = {my ground,draw,inner sep=0pt,
    minimum width=4.2pt,minimum height=11.2pt,anchor=input,rotate=90}}
\tikzset{xshiftu/.style = {shift = {(#1, 0)}}}
\tikzset{yshiftu/.style = {shift = {(0, #1)}}}
\tikzset{scriptstyle/.style={font=\everymath\expandafter{\the\everymath\scriptstyle}}}
\begin{document}

\begin{center}{\Large On a 2-relative entropy}\\
James Fullwood \end{center}

\abstract{We construct a 2-categorical extension of the relative entropy functor of Baez and Fritz, and show that our construction is functorial with respect to vertical morphisms. Moreover, we show such a `2-relative entropy' satisfies natural 2-categorial analogues of convex linearity, vanishing under optimal hypotheses, and lower semicontinuity. While relative entropy is a relative measure of information between probability distributions, we view our construction as a relative measure of information between \emph{channels}.}

\tableofcontents

%%%%%%%%%%%%%%%%%%%%%%%%%%%%%%%
\section{Introduction}
%%%%%%%%%%%%%%%%%%%%%%%%%%%%%%%

Let $X$ and $Y$ be finite sets which are the input and output alphabets of a discrete memoryless channel $X\xstoch{f} Y$ with probability transition matrix $f_{yx}$, representing the probability of the output $y$ given the input $x$. Every input $x$ then determines a probability distribution on $Y$ which we denote by $f^x$, so that $f^x(y)=f_{yx}$ for all $x\in X$ and $y\in Y$. The channel $X\xstoch{f} Y$ together with the choice of a prior distribution $p$ on $X$ will be denoted $(f|p)$, and such data then determines a distribution $\vartheta(f|p)$ on $X\times Y$ given by $\vartheta(f|p)_{(x,y)}=p_xf_{yx}$.  Given a second channel $X\xstoch{g} Y$ with prior distribution $q$ on $X$, the chain rule for relative entropy says that the relative entropy $D\left(\vartheta(f|p),\vartheta(g|q)\right)$ is given by
\be\label{RE1X}
D\left(\vartheta(f|p),\vartheta(g|q)\right)=D(p,q)+\sum_{x\in X}p_xD(f^x,g^x).
\ee
As the RHS of \eqref{RE1X} involves precisely the datum of the channels $f$ and $g$ together with the prior distributions $p$ and $q$, we view the quantity $D\left(\vartheta(f|p),\vartheta(g|q)\right)$ as a relative measure of information between the \emph{channels} $(f|p)$ and $(g|q)$. In particular, since from a Bayesian perspective $D(p,q)$ may be thought of as the amount of information gained upon discovering that the assumed prior distribution $p$ is actually $q$, it seems only natural to think of $D\left(\vartheta(f|p),\vartheta(g|q)\right)$ as the amount of information gained upon learning that the assumed channel $(f|p)$ is actually the channel $(g|q)$. 

To make such a Bayesian interpretation more precise, we build upon the work of Baez and Fritz \cite{BFRE}, who formulate a type of Bayesian inference as a \emph{process} $X\to Y$ (including a set of conditional hypotheses on the outcome of the process), which given a prior distribution $p$ on $X$ yields distributions $r$ on $Y$ and $q$ on $X$ in such a way that the relative entropy $D(p,q)$ has an operational meaning as a quantity associated with a Bayesian updating with respect to the process $X\to Y$ \footnote{Here $X$ may be thought of more generally as the set of possible states of some system to be measured, while $Y$ may be thought of as the possible outcomes of the measurement.}. Baez and Fritz then prove that up to a constant multiple, the map on such processes given by
\be\label{REQX1971}
(X\to Y)\mapsto \text{RE}(X\to Y):=D(p,q)
\ee
is the unique map satisfying the following axioms. 

\begin{enumerate}
\item\label{A1}
\underline{\bf{Functoriality}}: Given a composition of processes $X\to Y\to Z$, 
\[
\text{RE}(X\to Y\to Z)=\text{RE}(X\to Y)+\text{RE}(Y\to Z).
\]
\item\label{A2}
\underline{\bf{Convex Linearity}}: Given a collection of processes $U^x\to V^x$ indexed by the elements $x\in X$ of a finite probability space $(X,p)$, 
\[
\text{RE}\left(\sum_{x\in X}p_{x}(U^x\to V^x)\right)=\sum_{x\in X}p_{x}\text{RE}\left(U^x\to V^x\right).
\]
\item\label{A3} 
\underline{\bf{Vanishing Under Optimal Hypotheses}}: If the conditional hypotheses associated with a process $X\to Y$ are optimal, then 
\[
\text{RE}(X\to Y)=0.
\]
\item\label{A4}
\underline{\bf{Continuity}}: The map $(X\to Y)\mapsto \text{RE}(X\to Y)$ is lower semi-continuous.
\end{enumerate}

While Baez and Fritz facilitate their exposition using the language of category theory, knowing that a category consists of a class of objects together with a class of composable arrows (i.e. morphisms) between objects is all that is needed for an appreciation of their construction. From such a perspective, the aforementioned processes $X\to Y$ are morphisms in a category $\bf{FinStat}$, and the relative entropy assignment given by \eqref{REQX1971} is then a map from morphisms in $\bf{FinStat}$ to $[0,\infty]$. 

In what follows, we elevate the construction of Baez and Fritz to the level of 2-categories (or more precisely, \emph{double categories}), whose 2-morphisms may be viewed as certain processes between processes, or rather, processes which connect one channel to another. In particular, we construct a category $\bf{FinStat}_2$ which is a 2-level extension of the category $\bf{FinStat}$ introduced by Baez and Fritz, and define a relative entropy assignment $\text{RE}_2$ on 2-morphisms via the chain rule as given by \eqref{RE1X}. Moreover, we show that such a `2-relative entropy' satisfies the natural 2-level analogues of axioms \ref{A1}-\ref{A4} as satisfied by the relative entropy map RE.

As abstract as a relative entropy of processes between processes may seem, Shannon's Noisy Channel Coding Theorem---which is a cornerstone of information theory---is essentially a statement about transforming a noisy channel into a noiseless one via a sequence of codings and encodings. From such a viewpoint, information theory is fundamentally about processes (i.e., a sequence of codings and encodings), between processes (i.e., channels), and it is precisely this viewpoint with which we will proceed. Furthermore, there is a growing recent interest in axiomatic and categorical approaches to information theory \cite{BFL}\cite{CFS17}\cite{Fo12}\cite{Fr19}\cite{FAMI}\cite{FP1}\cite{leinster2019short}\cite{PaEntropy}, and the present work is a direct outgrowth of such activity.

%%%%%%%%%%%%%%%%%%%%%%
\section{The Category $\bf{FinStat}$}
%%%%%%%%%%%%%%%%%%%%%%
In this section, we introduce the first-level structure of interest, which is the catgeory $\FinStat$ introduced by Baez and Fritz \cite{BFRE}. Though we use the language of categories, knowing that a category consists of a class of composable arrows between a class of objects is sufficient for the comprehension of all categorical notions in this work.

\bd
Let $X$ and $Y$ be finite sets. A \define{discrete memoryless channel} (or simply \define{channel} for short) $X\xstoch{f} Y$ associates every $x\in X$ with a probability distribution $f^x$ on $Y$. In such as case, the sets $X$ and $Y$ are referred to as the \define{set of inputs} and \define{set of outputs} of the channel $f$ respectively, and $f^x(y)$ is the probability of receiving the output $y$ given the input $x$, which will be denoted by $f_{yx}$. 
\ed

\bd
If $X\xstoch{f} Y$ and $Y\xstoch{g} Z$ are channels, then the composition $X\xstoch{g\circ f} Z$ is given by
\[
g_{zx}=\sum_{x\in X}g_{zy}f_{yx}
\]
for all $x\in X$ and $y\in Y$.
\ed

\br
If $X\xstoch{f} Y$ is a channel such that for every $x\in X$ there exists a $y\in Y$ with $f_{yx}=1$, then such a $y$ is necessarily unique given $x$, and as such, $f$ may be identified with a function $f:X\to Y$. In such a case, we say that the channel $f$ is \define{pure} (or \define{deterministic}), and from here on we will not distinguish the difference between a pure channel and the associated function from its set of inputs to its set of outputs.
\er

\bd
If $\star$ denotes a set with a single element, then a channel $\star \xstoch{p} X$ is simply a probability distribution on $X$, and in such a case we will use $p_x$ to denote the probability of $x$ as given by $p$ for all $x\in X$. The pair $(X,p)$ is then referred to as a \define{finite probability space}.
\ed

\bnot
The datum of a channel $X\xstoch{f} Y$ together with a prior distribution $\star \xstoch{p} X$ on its set of inputs will be denoted $(f|p)$. 
\enot

\bd
Let $\FinStat$ denote the category whose objects are finite probability spaces, and whose morphisms $(X,p)\longrightarrow (Y,q)$ consist of the following data:
\begin{itemize}
\item
A function $f:X\to Y$ such that $f\circ p=q$.
\item 
A channel $Y\xstoch{s} X$ such that $f\circ s=\id_Y$. In other words, $Y\xstoch{s} X$  is a \define{stochastic section} of $f:X\to Y$.
\end{itemize}
A morphism in $\FinStat$ is then summarized by a diagram of the form
\be\label{FSMX17}
\xy0;/r.25pc/:
(0,7.5)*+{\star}="1";
(-15,-15)*+{X}="2";
(15,-15)*+{Y,}="0";
{\ar@{~>}"1";"2"_{p}};
{\ar@{~>}"1";"0"^{q}};
{\ar"2";"0"_{f}};
{\ar@{~>}@/_1.5pc/"0";"2"_{s}};
\endxy
\ee
and a composition of morphisms in $\FinStat$ is obtained via function composition and composition of stochastic sections. In such a case, it is straightforward to show that a composition of stochastic sections is a stochastic section, et cetera. The morphism corresponding to diagram \eqref{FSMX17} will often be denoted $(f,p,s)$.
\ed

\br
Note that in diagram \eqref{FSMX17}, a straight arrow is used for $X\overset{f}\longrightarrow Y$ as $f$ is a function, as opposed to a noisy channel.
\er

\br
The operational interpretation of diagram \eqref{FSMX17} is as follows. The set $X$ is thought of as the set of possible states of system, and $f:X\to Y$ is then thought of as a measurement process, so that $Y$ is then thought of as the set of possible states of some measuring apparatus. The stochastic section $Y\xstoch{s} X$ is then thought of as a set of hypotheses about the state of the system given a state of the measuring apparatus. In particular, $s_{xy}$ is thought of as the probability the system was in state $x$ given the state $y$ of the measuring apparatus.   
\er

\bd\label{OH19}
If the stochastic section $Y\xstoch{s} X$ in diagram \eqref{FSMX17} is such that $s\circ q=p$, then $s$ will be referred to as an \define{optimal hypothesis} for $(f|p)$.
\ed

\bd
Let $(X,p)$ be a finite probability space, and let $U^x\xstoch{\mu^x} V^x$ be a collection of channels with prior distributions $\star \xstoch{q^x} U^x$ indexed by $X$. The \define{convex combination} of $(\mu^x|q^x)$ with respect to $(X,p)$ is the channel 
\[
\left.\left(\coprod_{x\in X}U^x\xstoch{\bigoplus_{x\in X}\mu^x}\coprod_{x\in X}V^x\right|\bigoplus_{x\in X}p_xq^x\right),
\] 
where $\bigoplus_{x\in X}\mu^x$ is the channel given by
\[
\left(\bigoplus_{x\in X}\mu^x\right)_{vu}=\begin{cases} \mu^x_{vu} \quad \text{if} \hspace{2mm} (v,u)\in V^x\times U^x \hspace{2mm} \text{for some} \hspace{2mm} x\in X \\
0 \quad \hspace{0.41cm} \text{otherwise} \\
\end{cases},
\]
with prior distribution $\star \xstoch{\bigoplus_{x\in X}p_xq^x} \coprod U^x$ given by
\[
\left(\bigoplus_{x\in X}p_xq^x\right)_u=p_{x_u}q^{x_u}_{u}, 
\]
where $x_u$ is such that $u\in U^{x_u}$. Such a convex combination will be denoted $\bigoplus_{x\in X}p_x(\mu^x|q^x)$.
\ed

%%%%%%%%%%%%%%%%%%%%%%%%%%%%%%%%%%%%
\section{The Baez and Fritz Characterization of Relative Entropy}
%%%%%%%%%%%%%%%%%%%%%%%%%%%%%%%%%%%%
We now recall the Baez and Fritz characterization of relative entropy in $\FinStat$. 

\bd
Let $(X,p)$ be a finite probability space, and let 
\[
\left(U^x\overset{\mu^x}\longrightarrow V^x,\star \xstoch{q^x} U^x,V^x\xstoch{s^x}U^x\right)
\]
 be a collection of morphisms in $\FinStat$ indexed by $X$. The \define{convex combination} of $(\mu^x,q^x,s^x)$ with respect to $(X,p)$ is the morphism $\bigoplus_{x\in X}p_x(\mu^x,q^x,s^x)$ in $\FinStat$ corresponding to the diagram
\be
\xy0;/r.25pc/:
(0,10)*+{\star}="1";
(-20,-20)*+{\coprod_{x\in X}U^x}="2";
(20,-20)*+{\coprod_{x\in X}V^x,}="0";
{\ar@{~>}"1";"2"_{\bigoplus_{x\in X}p_xq^x}};
{\ar@{~>}"1";"0"^{\bigoplus_{x\in X}p_xr^x}};
{\ar"2";"0"_{\bigoplus_{x\in X}\mu^x}};
{\ar@{~>}@/_1.5pc/"0";"2"_{\bigoplus_{x\in X}s^x}};
\endxy
\ee
where $r^x=\mu^x\circ q^x$ for all $x\in X$.
\ed

\bd
Let $(f,p,s)$ be a morphism in $\FinStat$, and let $r=s\circ f\circ p$. The \define{relative entropy} of $(f,p,s)$ is the non-negative extended real number $\text{RE}(f,p,s)\in [0,\infty]$ given by
\[
\text{RE}(f,p,s)=D(p,r),
\]
where $D(p,r)=\sum_xp_x\log(p_x/r_x)$ is the relative entropy between the distributions $p$ and $r$ on $X$. 
\ed

\bd
Let $F:\FinStat\to [0,\infty]$ be a map from the morphisms in $\FinStat$ to the extended non-negative reals $[0,\infty]$. 
\itemize
\item
$F$ is said to be \define{functorial} if and only if for every composition $(g\circ f,p,s\circ t)$ of morphisms in $\FinStat$ we have
\[
F(g\circ f,p,s\circ t)=F(f,p,s)+F(g,f\circ p,t).
\]
\item
$F$ is said to be \define{convex linear} if and only if for every convex combination $\bigoplus_{x\in X}p_x(\mu^x,q^x,s^x)$ of morphisms in $\FinStat$ we have
\[
F\left(\bigoplus_{x\in X}p_x(\mu^x,q^x,s^x)\right)=\sum_{x\in X}p_xF(\mu^x,q^x,s^x).
\]
\item 
$F$ is said to be \define{vanishing under optimal hypotheses} if and only if for every morphism $(f,p,s)$ in $\FinStat$ with $s$ an optimal hypothesis we have
\[
F(f,p,s)=0.
\]
\item
$F$ is said to be \define{lower semicontinuous} if and only if for every sequence of morphisms $(f,p_n,s_n)$ in $\FinStat$ converging to a morphism $(f,p,s)$ we have
\[
F(f,p,s)\leq \liminf_{n\to \infty}F(f,p_n,s_n).
\]
\ed

\bt[The Baez and Fritz Characterization of Relative Entropy]\label{BFRE}
Let $\mathbb{S}$ be the collection of maps from the morphisms in $\FinStat$ to $[0,\infty]$ which are functorial, convex linear, vanishing under optimal hypotheses and lower semicontinuous. Then the following statements hold.
\begin{enumerate}[i.] 
\item
The relative entropy $\emph{RE}$ is an element of $\mathbb{S}$.
\item
If $F\in \mathbb{S}$, then $F=c\emph{RE}$ for some non-negative constant $c\in \R$.
\end{enumerate}
\et

%%%%%%%%%%%%%%%%%%%%%%
\section{The Category $\FinStat_2$}
%%%%%%%%%%%%%%%%%%%%%%
In this section, we introduce the second-level structure of interest, namely, the double category $\FinStat_2$, which is a 2-level extension of $\FinStat$.

\bd
Let $\bold{FinStat}_2$ denote the 2-category whose objects and 1-morphisms coincide with those of $\bold{FinStat}$, and whose 2-morphisms are constructed as follows. Given 1-morphisms
\[
(\mu,p,s)=\xy0;/r.25pc/:
(0,7.5)*+{\{\star\}}="1";
(-15,-15)*+{X}="2";
(15,-15)*+{X'}="0";
{\ar@{~>}"1";"2"_{p}};
{\ar@{~>}"1";"0"^{p'}};
{\ar"2";"0"_{\mu}};
{\ar@{~>}@/_1.5pc/"0";"2"_{s}};
\endxy
\quad
\text{and}
\quad
(\nu,q,t)=\xy0;/r.25pc/:
(0,7.5)*+{\{\star\}}="1";
(-15,-15)*+{Y}="2";
(15,-15)*+{Y'}="0";
{\ar@{~>}"1";"2"_{q}};
{\ar@{~>}"1";"0"^{q'}};
{\ar"2";"0"_{\nu}};
{\ar@{~>}@/_1.5pc/"0";"2"_{t}};
\endxy
,
\] 
a 2-morphism  $\spadesuit:(\mu,p,s) \Rightarrow (\nu,q,t)$ consists of channels $f:X\sto Y$ and $f':X'\sto Y'$ such that 
\begin{itemize}
\item $f\circ p=q$ 
\item $f'\circ p'=q'$ 
\item $\nu\circ f=f'\circ \mu$ 
\end{itemize}
The two morphism $\spadesuit:(\mu,p,s) \Rightarrow (\nu,q,t)$ may then be summarized by the following diagram.
\be\label{PYRA17}
\underline{\spadesuit:(\mu,p,s) \Rightarrow (\nu,q,t)}
\xy0;/r.25pc/:
(0,7.5)*+{\star}="1";
(-12.5,-7.5)*+{X}="2";
(12.5,-7.5)*+{Y}="0";
(-12.5,-22.5)*+{X'}="3";
(12.5,-22.5)*+{Y'}="4";
(0,-37.5)*+{\star}="5";
{\ar@{~>}"1";"2"_{p}};
{\ar@{~>}"1";"0"^{q}};
{\ar@{~>}"2";"0"_{f}};
{\ar@{~>}"3";"4"_{f'}};
{\ar"2";"3"^{\mu}};
{\ar"0";"4"_{\nu}};
{\ar@{~>}@/_1.5pc/"4";"0"_{t}};
{\ar@{~>}@/^1.5pc/"3";"2"^{s}};
{\ar@{~>}"5";"3"^{p'}};
{\ar@{~>}"5";"4"_{q'}};
\endxy
\ee
\ed

\br
A crucial point is that in the above diagram, all arrows necessarily commute except for any compositions involving the the outer ``wings'', $s$ and $t$. For example, the compositions $s\circ \mu \circ p$ and $t\circ f'\circ \mu$ need not be equal to $p$ and $f$ respectively.
\er

\br
Diagram \eqref{PYRA17} should be thought of as a flattened out pyramid, whose base is the inner square and whose vertex is obtained by the identification of the upper and lower stars in the diagram.
\er

For vertical composition of 2-morphisms, suppose $\clubsuit:(\mu',p',s') \Rightarrow (\nu',q',t')$ is the 2-morphism summarized by the following diagram.
\[
\underline{\spadesuit:(\mu',p',s') \Rightarrow (\nu',q',t')}
\xy0;/r.25pc/:
(0,7.5)*+{\star}="1";
(-12.5,-7.5)*+{X'}="2";
(12.5,-7.5)*+{Y'}="0";
(-12.5,-22.5)*+{X''}="3";
(12.5,-22.5)*+{Y''}="4";
(0,-37.5)*+{\star}="5";
{\ar@{~>}"1";"2"_{p'}};
{\ar@{~>}"1";"0"^{q'}};
{\ar@{~>}"2";"0"_{f'}};
{\ar@{~>}"3";"4"_{f''}};
{\ar"2";"3"^{\mu'}};
{\ar"0";"4"_{\nu'}};
{\ar@{~>}@/_1.5pc/"4";"0"_{t'}};
{\ar@{~>}@/^1.5pc/"3";"2"^{s'}};
{\ar@{~>}"5";"3"^{p''}};
{\ar@{~>}"5";"4"_{q''}};
\endxy
\]
The vertical composition $\clubsuit\circ \spadesuit$ is then summarized by the following diagram.
\[
\underline{\clubsuit\circ \spadesuit:(\mu'\circ\mu,p,s\circ s')\Rightarrow (\nu'\circ \nu,q,t\circ t')}
\xy0;/r.25pc/:
(0,7.5)*+{\star}="1";
(-12.5,-7.5)*+{X}="2";
(12.5,-7.5)*+{Y}="0";
(-12.5,-22.5)*+{X''}="3";
(12.5,-22.5)*+{Y''}="4";
(0,-37.5)*+{\star}="5";
{\ar@{~>}"1";"2"_{p}};
{\ar@{~>}"1";"0"^{q}};
{\ar@{~>}"2";"0"_{f}};
{\ar@{~>}"3";"4"_{f''}};
{\ar"2";"3"^{\mu'\circ \mu}};
{\ar"0";"4"_{\nu'\circ \nu}};
{\ar@{~>}@/_1.5pc/"4";"0"_{t\circ t'}};
{\ar@{~>}@/^1.5pc/"3";"2"^{s\circ s'}};
{\ar@{~>}"5";"3"^{p''}};
{\ar@{~>}"5";"4"_{q''}};
\endxy
\]

For horizontal composition, let $\heartsuit:(\nu,q,t)\Rightarrow (\xi,r,u)$ be a two morphism summarized by the following diagram.
\[
\underline{\heartsuit:(\nu,q,t)\Rightarrow (\xi,r,u)}
\xy0;/r.25pc/:
(0,7.5)*+{\star}="1";
(-12.5,-7.5)*+{Y}="2";
(12.5,-7.5)*+{Z}="0";
(-12.5,-22.5)*+{Y'}="3";
(12.5,-22.5)*+{Z'}="4";
(0,-37.5)*+{\star}="5";
{\ar@{~>}"1";"2"_{q}};
{\ar@{~>}"1";"0"^{r}};
{\ar@{~>}"2";"0"_{g}};
{\ar@{~>}"3";"4"_{g'}};
{\ar"2";"3"^{\nu}};
{\ar"0";"4"_{\xi}};
{\ar@{~>}@/_1.5pc/"4";"0"_{u}};
{\ar@{~>}@/^1.5pc/"3";"2"^{t}};
{\ar@{~>}"5";"3"^{q'}};
{\ar@{~>}"5";"4"_{r'}};
\endxy
\]
The horizontal composition $\heartsuit\circ \spadesuit$ is then summarized by the following diagram.
\[
\underline{\heartsuit\circ \spadesuit:(\mu,p,s)\Rightarrow (\xi,r,u)}
\xy0;/r.25pc/:
(0,7.5)*+{\star}="1";
(-12.5,-7.5)*+{X}="2";
(12.5,-7.5)*+{Z}="0";
(-12.5,-22.5)*+{X'}="3";
(12.5,-22.5)*+{Z'}="4";
(0,-37.5)*+{\star}="5";
{\ar@{~>}"1";"2"_{p}};
{\ar@{~>}"1";"0"^{r}};
{\ar@{~>}"2";"0"_{g\circ f}};
{\ar@{~>}"3";"4"_{g'\circ f'}};
{\ar"2";"3"^{\mu}};
{\ar"0";"4"_{\xi}};
{\ar@{~>}@/_1.5pc/"4";"0"_{u}};
{\ar@{~>}@/^1.5pc/"3";"2"^{s}};
{\ar@{~>}"5";"3"^{p'}};
{\ar@{~>}"5";"4"_{r'}};
\endxy
\]

%%%%%%%%%%%%%%%%%%%%%%%%%%%%%%%
\section{Convexity in $\bold{FinStat}_2$}
%%%%%%%%%%%%%%%%%%%%%%%%%%%%%%%
We now generalize the convex structure on morphisms in $\FinStat$ to 2-morphisms in $\FinStat_2$. For this, let $(X,p)$ be a finite probability space, and let $\spadesuit^{x}$ be a collection of 2-morphisms in $\bold{FinStat}_2$ indexed by $X$, where $\spadesuit^{x}$ is summarized by the following diagram.
\[
\underline{\spadesuit^{x}:(\mu^x,q^x,s^x) \Rightarrow (\nu^x,r^x,t^x)}
\xy0;/r.25pc/:
(0,7.5)*+{\star}="1";
(-12.5,-7.5)*+{U_x}="2";
(12.5,-7.5)*+{V_x}="0";
(-12.5,-22.5)*+{U'_x}="3";
(12.5,-22.5)*+{V'_x}="4";
(0,-37.5)*+{\star}="5";
{\ar@{~>}"1";"2"_{q^x}};
{\ar@{~>}"1";"0"^{r^x}};
{\ar@{~>}"2";"0"_{f_x}};
{\ar@{~>}"3";"4"_{f'_x}};
{\ar"2";"3"^{\mu^x}};
{\ar"0";"4"_{\nu^x}};
{\ar@{~>}@/_1.5pc/"4";"0"_{t^x}};
{\ar@{~>}@/^1.5pc/"3";"2"^{s^x}};
{\ar@{~>}"5";"3"^{{q'}^{x}}};
{\ar@{~>}"5";"4"_{{r'}^{x}}};
\endxy
\]

\bd
The \define{convex sum} $\bigoplus_{x\in X}p_x\spadesuit^{x}$ is the 2-morphism in $\bold{FinStat}_2$ summarized by the following diagram.
\[
\underline{\bigoplus_{x\in X}p_x\spadesuit^{x}:\bigoplus_{x\in X}p_x(\mu^x,q^x,s^x)\Rightarrow \bigoplus_{x\in X}p_x(\nu^x,r^x,t^x)}
\xy0;/r.25pc/:
(0,7.5)*+{\star}="1";
(-20,-7.5)*+{\coprod_{x\in X}U_x}="2";
(20,-7.5)*+{\coprod_{x\in X}V_x}="0";
(-20,-27.5)*+{\coprod_{x\in X}U'_x}="3";
(20,-27.5)*+{\coprod_{x\in X}V'_x}="4";
(0,-42.5)*+{\star}="5";
{\ar@{~>}"1";"2"_{\bigoplus_{x\in X} p_xq^x}};
{\ar@{~>}"1";"0"^{\bigoplus_{x\in X} p_xr^x}};
{\ar@{~>}"2";"0"_{\bigoplus_{x\in X} f_x}};
{\ar@{~>}"3";"4"_{\bigoplus_{x\in X} f'_x}};
{\ar"2";"3"^{\bigoplus_{x\in X} \mu^x}};
{\ar"0";"4"_{\bigoplus_{x\in X} \nu^x}};
{\ar@{~>}@/_1.5pc/"4";"0"_{\bigoplus_{x\in X} t^x}};
{\ar@{~>}@/^1.5pc/"3";"2"^{\bigoplus_{x\in X} s^x}};
{\ar@{~>}"5";"3"^{{\bigoplus_{x\in X} p_x{q'}^x}}};
{\ar@{~>}"5";"4"_{{\bigoplus_{x\in X} p_x{r'}^x}}};
\endxy
\]
\ed

%%%%%%%%%%%%%%%%%%%%%%%%%%%%%%%%%%%%%%
\section{Conditional relative entropy in $\bold{FinStat}_2$}
%%%%%%%%%%%%%%%%%%%%%%%%%%%%%%%%%%%%%%
We now introduce a measure of information associated with 2-morphisms in $\FinStat_2$  which we refer to as `conditional relative entropy'. The results proved in this section are essentially all lemmas for the results proved in the next section, where we introduce a 2-level extension of the relative entropy map RE, and show that it satisfies the 2-level analogues of the characterizing axioms of relative entropy.

\bd
With every 2-morphism 
\[
\underline{\spadesuit:(\mu,s|p) \Rightarrow (\nu,t|q)}
\xy0;/r.25pc/:
(0,7.5)*+{\star}="1";
(-12.5,-7.5)*+{X}="2";
(12.5,-7.5)*+{Y}="0";
(-12.5,-22.5)*+{X'}="3";
(12.5,-22.5)*+{Y'}="4";
(0,-37.5)*+{\star}="5";
{\ar@{~>}"1";"2"_{p}};
{\ar@{~>}"1";"0"^{q}};
{\ar@{~>}"2";"0"_{f}};
{\ar@{~>}"3";"4"_{f'}};
{\ar"2";"3"^{\mu}};
{\ar"0";"4"_{\nu}};
{\ar@{~>}@/_1.5pc/"4";"0"_{t}};
{\ar@{~>}@/^1.5pc/"3";"2"^{s}};
{\ar@{~>}"5";"3"^{p'}};
{\ar@{~>}"5";"4"_{q'}};
\endxy
\in \text{Mor}_2(\bold{FinStat}_2)
\]
we associate the non-negative extended real number $\text{CE}(\spadesuit)\in [0,\infty]$ given by
\be\label{CEDEF91}
\text{CE}(\spadesuit)=\sum_{x\in X}p_xD(f^x,(t\circ f'\circ \mu)^x),
\ee
where $D(-,-)$ is the standard relative entropy. We refer to $\text{CE}(\spadesuit)$ as the \define{conditional relative entropy} of $\spadesuit$.
\ed

\br
We refer to $\text{CE}(\spadesuit)$ as conditional relative entropy as its defining formula \eqref{CEDEF91} is structurally similar to the defining formula for conditional entropy. In particular, if $X\xstoch{f} Y$ is a channel with prior distribution $\star \xstoch{p} X$, then the conditional entropy $H(f|p)$ is given by
\[
H(f|p)=\sum_{x\in X}p_xH(f^x),
\]
where $H(f^x)$ is the Shannon entropy of the distribution $f^x$ on $Y$.
\er

\begin{proposition}\label{CECL979}
Conditional relative entropy in $\bold{FinStat}_2$ is convex linear, i.e., if $(X,p)$ is a finite probability space and $\spadesuit^{x}$ is a collection of 2-morphisms in $\bold{FinStat}_2$ indexed by $X$, then
\[
\emph{CE}\left(\bigoplus_{x\in X}p_x\spadesuit^x\right)=\sum_{x\in X}p_x \emph{CE}(\spadesuit^x).
\]
\end{proposition}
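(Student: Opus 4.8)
The plan is to expand both sides of the claimed identity directly from the defining formula \eqref{CEDEF91} and match them term by term, exploiting the block-diagonal nature of the direct-sum channels appearing in the convex sum $\bigoplus_{x\in X}p_x\spadesuit^x$.

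First I would record the data of the combined 2-morphism as read off from the convex-sum diagram: its input object is the coproduct $\coprod_{x\in X}U_x$, its top channel is $\bigoplus_{x\in X}f_x$, its prior is $\bigoplus_{x\in X}p_xq^x$, and the channel playing the role of $t\circ f'\circ\mu$ in \eqref{CEDEF91} is the composite $\left(\bigoplus_{x\in X}t^x\right)\circ\left(\bigoplus_{x\in X}f'_x\right)\circ\left(\bigoplus_{x\in X}\mu^x\right)$. Writing a generic input as $u\in U_{x_u}$ for the unique index $x_u$ with $u\in U_{x_u}$, the prior weight is $\left(\bigoplus_{x}p_xq^x\right)_u=p_{x_u}q^{x_u}_u$ by the definition of the convex combination.

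The key observation is that each of the three channels $\bigoplus_x\mu^x$, $\bigoplus_x f'_x$, and $\bigoplus_x t^x$ is block-diagonal, hence so is their composite; consequently, at the input $u\in U_{x_u}$ the distribution $\left(\bigoplus_x f_x\right)^u$ equals $f_{x_u}^u$ and the distribution $\left(\left(\bigoplus_x t^x\right)\circ\left(\bigoplus_x f'_x\right)\circ\left(\bigoplus_x\mu^x\right)\right)^u$ equals $(t^{x_u}\circ f'_{x_u}\circ\mu^{x_u})^u$, both supported inside the single block $V_{x_u}\subseteq\coprod_x V_x$. Since the relative entropy $D(-,-)$ only sums over the common support, the would-be cross-block terms vanish and each summand collapses to $D\left(f_{x_u}^u,(t^{x_u}\circ f'_{x_u}\circ\mu^{x_u})^u\right)$, namely the same relative entropy computed entirely within the $x_u$-block.

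Assembling these facts, $\mathrm{CE}\!\left(\bigoplus_x p_x\spadesuit^x\right)$ becomes $\sum_{u}p_{x_u}q^{x_u}_u\,D\left(f_{x_u}^u,(t^{x_u}\circ f'_{x_u}\circ\mu^{x_u})^u\right)$; reindexing the sum over the coproduct $\coprod_x U_x$ as the double sum $\sum_{x\in X}\sum_{u\in U_x}$ and pulling $p_x$ out of the inner sum leaves precisely $\sum_{x\in X}p_x\,\mathrm{CE}(\spadesuit^x)$, as desired. I expect the only genuine work to lie in the key observation—verifying that the composite of direct-sum channels remains block-diagonal and that the relative entropy truly localizes to a single block with no cross terms—while the reindexing and factoring step is purely routine.
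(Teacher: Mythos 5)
Your proposal is correct and follows essentially the same route as the paper's proof: expand $\mathrm{CE}$ of the convex sum from the definition, use the block-diagonal structure of the direct-sum channels to localize each relative entropy term to its own block, then reindex the sum over $\coprod_{x\in X}U_x$ as a double sum and factor out $p_x$. The only difference is cosmetic---you make explicit the block-localization argument that the paper performs implicitly in a single equality---so there is nothing to add.
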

\begin{proof}
Suppose $\bigoplus_{x\in X}p_x\spadesuit^x$ is summarized by the following diagram
\[
\xy0;/r.25pc/:
(0,7.5)*+{\star}="1";
(-20,-7.5)*+{\coprod_{x\in X}U_x}="2";
(20,-7.5)*+{\coprod_{x\in X}V_x}="0";
(-20,-27.5)*+{\coprod_{x\in X}U'_x}="3";
(20,-27.5)*+{\coprod_{x\in X}V'_x}="4";
(0,-42.5)*+{\star}="5";
{\ar@{~>}"1";"2"_{\bigoplus_{x\in X} p_xq^x}};
{\ar@{~>}"1";"0"^{\bigoplus_{x\in X} p_xr^x}};
{\ar@{~>}"2";"0"_{\bigoplus_{x\in X} f_x}};
{\ar@{~>}"3";"4"_{\bigoplus_{x\in X} f'_x}};
{\ar"2";"3"^{\bigoplus_{x\in X} \mu^x}};
{\ar"0";"4"_{\bigoplus_{x\in X} \nu^x}};
{\ar@{~>}@/_1.5pc/"4";"0"_{\bigoplus_{x\in X} t^x,}};
{\ar@{~>}@/^1.5pc/"3";"2"^{\bigoplus_{x\in X} s^x}};
{\ar@{~>}"5";"3"^{{\bigoplus_{x\in X} p_x{q'}^x}}};
{\ar@{~>}"5";"4"_{{\bigoplus_{x\in X} p_x{r'}^x}}};
\endxy
\]
and let $U=\coprod_{x\in X}U_x$, $\mu=\oplus_{x\in X}\mu^x$, $f=\oplus_{x\in X}f^x$, $f'=\oplus_{x\in X}f'^x$ and $t=\oplus_{x\in X}t^x$. We then have
\begin{eqnarray*}
\text{CE}\left(\bigoplus_{x\in X}p_x\spadesuit^x\right)&=&\sum_{u\in U}q_u D(f^{u}, (t\circ f'\circ \mu)^u) \\
&=&\sum_{x\in X}\sum_{u_x\in U_x} p_xq^{x}_{u_x} D(f^{u_x}, (t\circ f'\circ \mu)^{u_x}) \\
&=&\sum_{x\in X}p_x\sum_{u_x\in U_x} q^{x}_{u_x} D((f_x)^{u_x}, (t^{x}\circ f'_{x}\circ \mu^{x})^{u_x}) \\
&=&\sum_{x\in X}p_x\text{CE}(\spadesuit^x), \\
\end{eqnarray*}
as desired.
\end{proof}

\begin{theorem}\label{CEFV19}
Conditional relative entropy in $\bold{FinStat}_2$ is functorial with respect to vertical composition, i.e., if $\clubsuit\circ \spadesuit$ is a vertical composition in $\bold{FinStat}_2$, then $\emph{CE}(\clubsuit\circ \spadesuit)=\emph{CE}(\clubsuit)+\emph{CE}(\spadesuit)$.
\end{theorem}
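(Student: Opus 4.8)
The plan is to reduce the functoriality identity to a single relative-entropy chain rule applied at each input $x\in X$, followed by a re-indexing. Write $\spadesuit$ with top channel $f\colon X\sto Y$, bottom channel $f'$, connecting functions $\mu\colon X\to X'$ and $\nu\colon Y\to Y'$, and section $t\colon Y'\sto Y$; write $\clubsuit$ with channels $f',f''$, functions $\mu'\colon X'\to X''$, $\nu'$, and section $t'$. Since $\mu,\mu',\nu$ are genuine functions, for every $x\in X$ one has $(t\circ f'\circ\mu)^x=(t\circ f')^{\mu(x)}$ and $((t\circ t')\circ f''\circ(\mu'\circ\mu))^x=(t\circ t'\circ f'')^{\mu'(\mu(x))}$. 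Setting $x'=\mu(x)$ and $x''=\mu'(x')$, the three conditional relative entropies become $p$-weighted sums over $x$ of $D(f^x,(t\circ f')^{x'})$, $D({f'}^{x'},(t'\circ f'')^{x''})$ and $D(f^x,(t\circ t'\circ f'')^{x''})$ respectively. Because $p'=\mu\circ p$, the sum $\sum_x p_x D({f'}^{\mu(x)},(t'\circ f'')^{\mu'(\mu(x))})$ collapses to $\sum_{x'}p'_{x'}D({f'}^{x'},(t'\circ f''\circ\mu')^{x'})=\text{CE}(\clubsuit)$. Thus it suffices to prove, for each fixed $x$, the pointwise identity
\[
D\!\left(f^x,\,(t\circ t'\circ f'')^{x''}\right)=D\!\left(f^x,\,(t\circ f')^{x'}\right)+D\!\left({f'}^{x'},\,(t'\circ f'')^{x''}\right).
\]

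The tool I would use is the chain rule for relative entropy under coarse-graining by the function $\nu\colon Y\to Y'$: for distributions $a,b$ on $Y$,
\[
D(a,b)=D(\nu\circ a,\,\nu\circ b)+\sum_{w\in Y'}(\nu\circ a)_w\,D\!\left(a_{|w},\,b_{|w}\right),
\]
where $a_{|w}$ denotes the conditional of $a$ on the fiber $\nu^{-1}(w)$. Two structural facts drive the argument. First, the 2-morphism constraint $\nu\circ f=f'\circ\mu$ gives, at input $x$, the pushforward identity $\nu\circ f^x={f'}^{x'}$. Second, since $t$ is a stochastic section of $\nu$ (i.e.\ $\nu\circ t=\id_{Y'}$), each column $t^w$ is supported on the fiber $\nu^{-1}(w)$; consequently, for any distribution $d$ on $Y'$ the composite $t\circ d$ satisfies $\nu\circ(t\circ d)=d$ and has fiberwise conditional $(t\circ d)_{|w}=t^w$, \emph{independent of $d$}.

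Applying the second fact to $d={f'}^{x'}$ (yielding $c:=(t\circ f')^{x'}$) and to $d=(t'\circ f'')^{x''}$ (yielding $b:=(t\circ t'\circ f'')^{x''}$), both $b$ and $c$ have the same fiber conditionals $t^w$, while $\nu\circ c={f'}^{x'}$ and $\nu\circ b=(t'\circ f'')^{x''}$. The chain rule applied to $D(f^x,c)$ has vanishing base term, because both $f^x$ and $c$ push forward to ${f'}^{x'}$ by the first fact, so it equals the fiber sum $\sum_w {f'}^{x'}_w\,D((f^x)_{|w},t^w)$; the chain rule applied to $D(f^x,b)$ has exactly the same fiber sum but base term $D({f'}^{x'},(t'\circ f'')^{x''})$. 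Subtracting yields the pointwise identity, and summing against $p_x$ (together with the re-indexing above) gives $\text{CE}(\clubsuit\circ\spadesuit)=\text{CE}(\spadesuit)+\text{CE}(\clubsuit)$.

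The main obstacle is the second structural fact together with the bookkeeping that makes the chain rule applicable: one must check that $t^w$ is genuinely fiber-supported and that the conditional of $t\circ d$ is exactly $t^w$ regardless of $d$, which is precisely what forces the two fiber sums to coincide and cancel. Care is also needed with the standard conventions for $D$ when some $b_y$ or $(\nu\circ b)_w$ vanishes, in which case the identity is read as an equality in $[0,\infty]$; I would therefore note that the support condition governing finiteness is compatible across all three terms before performing the subtraction.
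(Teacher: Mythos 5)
Your proof is correct, but it takes a genuinely different route from the paper's. The paper first establishes a closed formula (Lemma~\ref{LEV77}, item (1)),
\[
\mathrm{CE}(\spadesuit)=\sum_{x\in X}\sum_{y\in Y}p_xf_{yx}\log\!\left(\frac{f_{yx}}{t_{y\nu(y)}f'_{\nu(y)\mu(x)}}\right),
\]
and then proves functoriality by a global manipulation: it inserts the factor $f'_{\nu(y)\mu(x)}/f'_{\nu(y)\mu(x)}$ inside the logarithm of $\mathrm{CE}(\clubsuit\circ\spadesuit)$, splits the double sum in two, and collapses the second sum to $\mathrm{CE}(\clubsuit)$ using the joint re-indexing identity $p'_{x'}f'_{y'x'}=\sum_{x\in\mu^{-1}(x')}\sum_{y\in\nu^{-1}(y')}p_xf_{yx}$ (Lemma~\ref{LEV77}, item (2)). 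You instead reduce to a pointwise identity at each input $x$ and prove it by the chain rule for relative entropy under the coarse-graining $\nu$, with the key observation that both comparison distributions $(t\circ f')^{x'}$ and $(t\circ t'\circ f'')^{x''}$ are of the form $t\circ d$ with $t$ fiber-supported, hence share the fiber conditionals $t^w$; their divergences from $f^x$ therefore differ exactly by the base term $D\bigl({f'}^{x'},(t'\circ f'')^{x''}\bigr)$, which re-indexes to $\mathrm{CE}(\clubsuit)$ using only $p'=\mu\circ p$. The underlying structural inputs are the same --- your fiber-support fact is the paper's Lemma~\ref{LCM17}, item (2), your pushforward identity $\nu\circ f^x={f'}^{\mu(x)}$ is the commuting-square condition that also drives Lemma~\ref{LEV77}, item (2), and the chain rule you invoke is itself proved by exactly the log-splitting the paper performs by hand --- but the packaging differs meaningfully. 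Your argument is more conceptual: it explains the theorem as ``the composite's comparison channel costs exactly one extra base-level divergence,'' it confines the zero-probability conventions to one standard lemma, and it only re-indexes over $X$ rather than over $X\times Y$. The paper's computation, by contrast, is elementary and fully self-contained, needing no external information-theoretic facts. One small point of care in your version, which you correctly flag: rather than literally subtracting (which is illegitimate when the common fiber sum is $+\infty$), one should present both chain-rule decompositions as base term plus the \emph{same} fiber sum and substitute one into the other, so the identity holds in $[0,\infty]$.
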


\begin{lemma}\label{LCM17}
Let $X\overset{f}\sto Y\overset{g}\sto Z$ be a composition of channels.
\begin{enumerate}
\item\label{T1}
 If $f$ is a pure channel, then $(g\circ f)_{zx}=g_{zf(x)}$.
\item\label{T2}
If $g$ is a stochastic section of a pure channel $Z\overset{h}\to Y$, then $(g\circ f)_{zx}=g_{zh(z)}f_{h(z)x}$.
\end{enumerate}
\end{lemma}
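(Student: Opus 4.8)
The plan is to unwind both claims directly from the composition formula $(g\circ f)_{zx}=\sum_{y\in Y}g_{zy}f_{yx}$, exploiting in each case the special structure of one of the two channels to collapse the sum over $Y$ to a single surviving term. Both parts are ultimately bookkeeping once the right support statement is isolated.

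For part \ref{T1}, I would use that $f$ being pure means $f_{yx}=1$ when $y=f(x)$ and $f_{yx}=0$ otherwise. Substituting this into the composition formula, every summand with $y\neq f(x)$ vanishes, so the sum reduces to the single term indexed by $y=f(x)$, yielding $(g\circ f)_{zx}=g_{zf(x)}$ immediately.

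Part \ref{T2} is where the real content lies. The key step is to show that the section identity forces $g_{zy}=0$ whenever $h(z)\neq y$. To see this, I would expand the stochastic-section condition $h\circ g=\id_Y$ entrywise: since $h$ is pure, $h_{y'z}=1$ exactly when $y'=h(z)$, so
\[
\delta_{y'y}=(h\circ g)_{y'y}=\sum_{z\in Z}h_{y'z}g_{zy}=\sum_{z\,:\,h(z)=y'}g_{zy}.
\]
Taking $y'\neq y$ gives $\sum_{z:\,h(z)=y'}g_{zy}=0$, and since each $g_{zy}\geq 0$ (being an entry of a probability distribution), every such term must vanish; hence $g_{zy}=0$ unless $h(z)=y$. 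With this in hand, in the composition $(g\circ f)_{zx}=\sum_{y\in Y}g_{zy}f_{yx}$ the only $y$ contributing a nonzero summand for a given $z$ is $y=h(z)$, so the sum collapses to $g_{zh(z)}f_{h(z)x}$, as claimed.

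I expect the support argument in part \ref{T2}---the vanishing of $g_{zy}$ off the fiber $h(z)=y$---to be the only non-formal step, and it is precisely here that nonnegativity of the channel entries is indispensable: without it, cancellation within a fiber sum could occur and the conclusion would fail. Everything else is a mechanical substitution of Kronecker deltas into the composition formula.
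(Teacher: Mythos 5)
Your proof is correct, and it takes the only natural route: the paper itself merely asserts that both statements ``follow immediately from the definitions,'' and your argument is precisely the detailed unwinding of those definitions that this assertion leaves implicit. In particular, your support argument for part (2) — using nonnegativity of the entries $g_{zy}$ to conclude that $\sum_{z:\,h(z)=y'}g_{zy}=0$ forces each term to vanish — is exactly the non-formal step the paper glosses over, and you are right that it is where the real content lies.
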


\bprf
The statements \ref{T1} and \ref{T2} follow immediately from the definitions of pure channel and stochastic section. 
\eprf

\begin{lemma}\label{LEV77}
Let $\spadesuit$ be a 2-morphism in $\bold{FinStat}_2$ as summarized by the diagram 
\[
\underline{\spadesuit:(\mu,p,s) \Rightarrow (\nu,q,t)}
\xy0;/r.25pc/:
(0,7.5)*+{\star}="1";
(-12.5,-7.5)*+{X}="2";
(12.5,-7.5)*+{Y}="0";
(-12.5,-22.5)*+{X'}="3";
(12.5,-22.5)*+{Y'}="4";
(0,-37.5)*+{\star}="5";
{\ar@{~>}"1";"2"_{p}};
{\ar@{~>}"1";"0"^{q}};
{\ar@{~>}"2";"0"_{f}};
{\ar@{~>}"3";"4"_{f'}};
{\ar"2";"3"^{\mu}};
{\ar"0";"4"_{\nu}};
{\ar@{~>}@/_1.5pc/"4";"0"_{t}};
{\ar@{~>}@/^1.5pc/"3";"2"^{s}};
{\ar@{~>}"5";"3"^{p'}};
{\ar@{~>}"5";"4"_{q'}};
\endxy
\]
Then 
\begin{enumerate}
\item $\emph{CE}(\spadesuit)=\sum_{x\in X}\sum_{y\in Y}p_xf_{yx}\log\left(\frac{f_{yx}}{t_{y\nu(y)}f'_{\nu(y)\mu(x)}}\right)$ for all $x\in X$. \\
\item $p'_{x'}f'_{y'x'}=\sum_{x\in \mu^{-1}(x')}\sum_{y\in \nu^{-1}(y')}p_xf_{yx}$ for all $(x',y')\in X'\times Y'$. \\
\end{enumerate}
\end{lemma}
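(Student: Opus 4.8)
The plan is to prove both identities by explicitly computing the entries of the composite channel $X \overset{\mu}\sto X' \overset{f'}\sto Y' \overset{t}\sto Y$ appearing in the definition \eqref{CEDEF91} of $\text{CE}(\spadesuit)$, using Lemma~\ref{LCM17} to collapse the sums over the intermediate indices.

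For part~(1), I would first simplify the composite $(t\circ f'\circ\mu)_{yx}$. Since $\mu$ is a pure channel, part~\ref{T1} of Lemma~\ref{LCM17} gives $(f'\circ\mu)_{y'x}=f'_{y'\mu(x)}$, which collapses the sum over $X'$. Next, since $t$ is a stochastic section of the pure channel $\nu\colon Y\to Y'$, I would apply part~\ref{T2} of Lemma~\ref{LCM17} to the composition $t\circ(f'\circ\mu)$ to collapse the sum over $Y'$, obtaining
\[
(t\circ f'\circ\mu)_{yx}=t_{y\nu(y)}\,f'_{\nu(y)\mu(x)}.
\]
Substituting this expression, together with $f^x(y)=f_{yx}$, into $D(f^x,(t\circ f'\circ\mu)^x)=\sum_{y}f_{yx}\log\!\big(f_{yx}/(t\circ f'\circ\mu)_{yx}\big)$ and then into \eqref{CEDEF91} produces the claimed double sum after interchanging the order of summation.

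For part~(2), I would expand the right-hand side from the inside out. The inner sum $\sum_{y\in\nu^{-1}(y')}f_{yx}$ is exactly $(\nu\circ f)_{y'x}$ because $\nu$ is pure; the commutativity constraint $\nu\circ f=f'\circ\mu$ of the 2-morphism together with part~\ref{T1} of Lemma~\ref{LCM17} then gives $\sum_{y\in\nu^{-1}(y')}f_{yx}=(f'\circ\mu)_{y'x}=f'_{y'\mu(x)}$. For $x\in\mu^{-1}(x')$ this is $f'_{y'x'}$, so the outer sum factors as $f'_{y'x'}\sum_{x\in\mu^{-1}(x')}p_x$; finally the constraint $\mu\circ p=p'$ (equivalently $p'_{x'}=\sum_{x\in\mu^{-1}(x')}p_x$, since $\mu$ is pure) identifies the remaining sum as $p'_{x'}$, yielding $p'_{x'}f'_{y'x'}$.

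The only delicate point is the bookkeeping in applying part~\ref{T2} of Lemma~\ref{LCM17}: one must correctly match the roles so that $t$ is the stochastic section and $\nu$ is the pure channel of which it is a section, and keep the composition order (first $\mu$, then $f'$, then $t$) straight throughout. Once the composite channel is identified as $t_{y\nu(y)}f'_{\nu(y)\mu(x)}$, both parts reduce to routine substitution and re-indexing.
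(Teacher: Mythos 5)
Your proposal is correct and follows essentially the same route as the paper: for part (1) you collapse $(t\circ f'\circ\mu)_{yx}$ to $t_{y\nu(y)}f'_{\nu(y)\mu(x)}$ via the two parts of Lemma~\ref{LCM17} and substitute into \eqref{CEDEF91}, and for part (2) you use the constraint $\nu\circ f=f'\circ\mu$ together with purity of $\mu$ and $\nu$ to identify the double sum with $p'_{x'}f'_{y'x'}$, exactly as the paper does (merely working from the right-hand side rather than the left, and making explicit the relation $p'_{x'}=\sum_{x\in\mu^{-1}(x')}p_x$ that the paper uses tacitly).
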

\begin{proof} To prove item (1), let $x\in X$ and $y\in Y$. Then
\begin{eqnarray*}
(t\circ f'\circ \mu)_{yx}&=&(t\circ (f'\circ \mu))_{yx} \\
&=&\sum_{y'\in Y'}t_{yy'}(f'\circ \mu)_{y'x} \\
&=&\sum_{y'\in Y}t_{yy'}f'_{y'\mu(x)} \\
&=&t_{y\nu(y)}f'_{\nu(y)\mu(x)},
\end{eqnarray*}
where the third equality follows from Lemma~\ref{LCM17} since $\mu$ is a pure channel, and the fourth equality follows also from Lemma~\ref{LCM17} since $t$ is a stochastic section of a pure channel. We then have
\begin{eqnarray*}
\text{CE}(\spadesuit)&=&\sum_{x\in X}p_xD(f^x,(t\circ f'\circ \mu)^x) \\
&=&\sum_{x\in X}\sum_{y\in Y}p_xf_{yx}\log\left(\frac{f_{yx}}{(t\circ f'\circ \mu)_{yx}}\right) \\
&=&\sum_{x\in X}\sum_{y\in Y}p_xf_{yx}\log\left(\frac{f_{yx}}{t_{y\nu(y)}f'_{\nu(y)\mu(x)}}\right),
\end{eqnarray*}
as desired.

To prove item (2), the condition $\nu\circ f=f'\circ \mu$ is equivalent to the equation $(\nu\circ f)_{y'x}=(f'\circ \mu)_{y'x}$ for all $y'\in Y'$ and $x\in X$. And since
\[
(f'\circ \mu)_{y'x}=f'_{y'\mu(x)}, 
\]
and 
\[
(\nu\circ f)_{y'x}=\sum_{y\in Y}\nu_{y'y}f_{yx}=\sum_{y\in \nu^{-1}(y')}\nu_{y'y}f_{yx}=\sum_{y\in \nu^{-1}(y')}f_{yx},
\]
it follows that $f'_{y'\mu(x)}=\sum_{y\in \nu^{-1}(y')}f_{yx}$, thus for all $x'\in X'$ and $y'\in Y$ it follows that $f'_{y'x'}=\sum_{y\in \nu^{-1}(y')}f_{yx}$ for all $x\in \mu^{-1}(x')$. As such, we have
\[
p'_{x'}f'_{y'x'}=\left(\sum_{x\in \mu^{-1}(x')}p_x\right)\left(\sum_{y\in \nu^{-1}(y')}f_{yx}\right)=\sum_{x\in \mu^{-1}(x')}\sum_{y\in \nu^{-1}(y')}p_xf_{yx},
\]
as desired.
\end{proof}
\begin{proof}[Proof of Theorem~\ref{CEFV19}] 
Suppose $\spadesuit$ and $\clubsuit$ are such that the vertical composition $\clubsuit\circ \spadesuit$ is summarized by the following diagram.
\[
\underline{\clubsuit\circ \spadesuit}:
\xy0;/r.25pc/:
(0,7.5)*+{\star}="1";
(-12.5,-7.5)*+{X}="2";
(12.5,-7.5)*+{Y}="0";
(-12.5,-22.5)*+{X'}="3";
(12.5,-22.5)*+{Y'}="4";
(-12.5,-37.5)*+{X''}="5";
(12.5,-37.5)*+{Y''}="6";
(0,-52.5)*+{\star}="7";
{\ar@{~>}"1";"2"_{p}};
{\ar@{~>}"1";"0"^{q}};
{\ar@{~>}"2";"0"_{f}};
{\ar@{~>}"3";"4"_{f'}};
{\ar"2";"3"_{\mu}};
{\ar"0";"4"^{\nu}};
{\ar"3";"5"_{\mu'}};
{\ar"4";"6"^{\nu'}};
{\ar@{~>}"5";"7"_{p''}};
{\ar@{~>}"6";"7"^{q''}};
{\ar@{~>}"5";"6"_{f''}};
{\ar@{~>}@/_1.5pc/"4";"0"_{t}};
{\ar@{~>}@/^1.5pc/"3";"2"^{s}};
{\ar@{~>}@/_1.5pc/"6";"4"_{t'}};
{\ar@{~>}@/^1.5pc/"5";"3"^{s'}};
\endxy
\]
By item (1) in Lemma~\ref{LEV77} we then have
\[
\text{CE}(\clubsuit\circ \spadesuit)=\sum_{x\in X}\sum_{y\in Y}p_xf_{yx}\log\left(\frac{f_{yx}}{(t\circ t')_{y(\nu'\circ \nu)(y)}f''_{(\nu'\circ \nu)(y)(\mu'\circ \mu)(x)}}\right), \\
\]
and since $t$ is a section of the pure channel $\nu$, by item (2) of Lemma~\ref{LCM17} it follows that for every $y\in Y$ we have
\[
(t\circ t')_{y(\nu'\circ \nu)(y)}=t_{y\nu(y)}t'_{\nu(y)(\nu'\circ \nu)(y)}.
\]
As such, we have
\begin{eqnarray*}
\text{CE}(\clubsuit\circ \spadesuit)&=&\sum_{x\in X}\sum_{y\in Y}p_xf_{yx}\log\left(\frac{f_{yx}}{t_{y\nu(y)}t'_{\nu(y)(\nu'\circ \nu)(y)}f''_{(\nu'\circ \nu)(y)}}\right) \\
&=&\sum_{x\in X}\sum_{y\in Y}p_xf_{yx}\log\left(\frac{f_{yx}f'_{\nu(y)\mu(x)}}{t_{y\nu(y)}f'_{\nu(y)\mu(x)}t'_{\nu(y)(\nu'\circ \nu)(y)}f''_{(\nu'\circ \nu)(y)(\mu'\circ \mu)(x)}}\right) \\
&=&\sum_{x\in X}\sum_{y\in Y}p_xf_{yx}\log\left(\frac{f_{yx}}{t_{y\nu(y)}f'_{\nu(y)\mu(x)}}\right) \\
&+& \sum_{x\in X}\sum_{y\in Y}p_xf_{yx}\log\left(\frac{f'_{\nu(y)\mu(x)}}{t'_{\nu(y)(\nu'\circ \nu)(y)}f''_{(\nu'\circ \nu)(y)(\mu'\circ \mu)(x)}}\right) \\
&=&\sum_{x\in X}\sum_{y\in Y}p_xf_{yx}\log\left(\frac{f_{yx}}{t_{y\nu(y)}f'_{\nu(y)\mu(x)}}\right) \\
&+&\sum_{x'\in X'}\sum_{y'\in Y'}p'_{x'}f'_{y'x'}\log\left(\frac{f'_{y'x'}}{t'_{y'\nu'(y')}f''_{\nu'(y')\mu'(x')}}\right) \quad  \quad (\text{by item (2) of \text{Lemma}~\ref{LEV77}}) \\
&=&\sum_{x\in X}p_xD(f^x,(t\circ f'\circ \mu)^x)+\sum_{x'\in X'}p'_{x'}D(f'^{x'},(t'\circ f''\circ \mu')^{x'}) \\
&=&\text{CE}(\clubsuit)+\text{CE}(\spadesuit),
\end{eqnarray*}
as desired (the second-to-last equality follows by item (1) of \text{Lemma}~\ref{LEV77}).
\end{proof}

%%%%%%%%%%%%%%%%%%%%%%%%%%%%%%
\section{Relative entropy in $\bold{FinStat}_2$}
%%%%%%%%%%%%%%%%%%%%%%%%%%%%%%
In this section we introduce a 2-level extension of the relative entropy map RE introduced by Baez and Fritz, and show that it satisfies the natural 2-level analogues of functoriality, convex linearity, vanishing under optimal hypotheses and lower semicontinuity.

\bd
With every 2-morphism 
\[
\underline{\spadesuit:(\mu,p,s) \Rightarrow (\nu,q,t)}
\xy0;/r.25pc/:
(0,7.5)*+{\star}="1";
(-12.5,-7.5)*+{X}="2";
(12.5,-7.5)*+{Y}="0";
(-12.5,-22.5)*+{X'}="3";
(12.5,-22.5)*+{Y'}="4";
(0,-37.5)*+{\star}="5";
{\ar@{~>}"1";"2"_{p}};
{\ar@{~>}"1";"0"^{q}};
{\ar@{~>}"2";"0"_{f}};
{\ar@{~>}"3";"4"_{f'}};
{\ar"2";"3"^{\mu}};
{\ar"0";"4"_{\nu}};
{\ar@{~>}@/_1.5pc/"4";"0"_{t}};
{\ar@{~>}@/^1.5pc/"3";"2"^{s}};
{\ar@{~>}"5";"3"^{p'}};
{\ar@{~>}"5";"4"_{q'}};
\endxy
\in \text{Mor}_2(\bold{FinStat}_2)
\]
we associate the non-negative extended real number $\text{RE}_2(\spadesuit)\in [0,\infty]$ given by
\be\label{REDEFX77}
\text{RE}_2(\spadesuit)=\text{RE}(\mu,p,s)+\text{CE}(\spadesuit),
\ee
which we refer to as the \define{2-relative entropy} of $\spadesuit$. We note that the quantity $\text{RE}(\mu,p,s)$ appearing on the RHS of \eqref{REDEFX77} is the relative entropy associated with morphism $(\mu,p,s)$ in $\FinStat$, so that
\[
\text{RE}(\mu,p,s)=D(p,s\circ \mu\circ p),
\]
where $D(-,-)$ is the standard relative entropy. 
\ed

\begin{proposition}\label{RECL979}
2-Relative entropy is convex linear, i.e., if $(X,p)$ is a finite probability space and $\spadesuit^{x}$ is a collection of 2-morphisms in $\bold{FinStat}_2$ indexed by $X$, then
\[
\emph{RE}_2\left(\bigoplus_{x\in X}p_x\spadesuit^x\right)=\sum_{x\in X}p_x \emph{RE}_2(\spadesuit^x).
\]
\end{proposition}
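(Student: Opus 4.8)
The plan is to exploit the fact that, by its very definition \eqref{REDEFX77}, the $2$-relative entropy $\text{RE}_2(\spadesuit)$ splits as a sum of two pieces: the ordinary relative entropy $\text{RE}(\mu,p,s)$ of the top $1$-morphism of $\spadesuit$, and the conditional relative entropy $\text{CE}(\spadesuit)$. Since the assertion to be proved is itself an additive identity, it suffices to establish convex linearity separately for each of these two summands and then add.

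First I would unpack the left-hand side. By definition, $\text{RE}_2\bigl(\bigoplus_{x\in X}p_x\spadesuit^x\bigr)=\text{RE}(M)+\text{CE}\bigl(\bigoplus_{x\in X}p_x\spadesuit^x\bigr)$, where $M$ denotes the top $1$-morphism of the convex sum $\bigoplus_{x\in X}p_x\spadesuit^x$. Reading off the defining diagram of the convex sum in $\FinStat_2$, the morphism $M$ is exactly $\bigl(\bigoplus_{x\in X}\mu^x,\ \bigoplus_{x\in X}p_xq^x,\ \bigoplus_{x\in X}s^x\bigr)$, which is precisely the convex combination $\bigoplus_{x\in X}p_x(\mu^x,q^x,s^x)$ of the individual top $1$-morphisms in the sense of $\FinStat$. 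This identification is the one genuine bookkeeping point of the proof, and it follows immediately from comparing the convex-sum diagram for $2$-morphisms with the definition of convex combination of morphisms in $\FinStat$.

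With $M$ so identified, the two summands are handled by results already in hand. For the relative-entropy piece, the Baez and Fritz characterization (Theorem~\ref{BFRE}(i)) asserts that $\text{RE}$ is convex linear, so
\[
\text{RE}(M)=\text{RE}\Bigl(\bigoplus_{x\in X}p_x(\mu^x,q^x,s^x)\Bigr)=\sum_{x\in X}p_x\,\text{RE}(\mu^x,q^x,s^x).
\]
For the conditional piece, Proposition~\ref{CECL979} gives $\text{CE}\bigl(\bigoplus_{x\in X}p_x\spadesuit^x\bigr)=\sum_{x\in X}p_x\,\text{CE}(\spadesuit^x)$.

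Finally I would recombine. Adding the two displays and grouping the sums by $x$ yields
\[
\text{RE}_2\Bigl(\bigoplus_{x\in X}p_x\spadesuit^x\Bigr)=\sum_{x\in X}p_x\bigl(\text{RE}(\mu^x,q^x,s^x)+\text{CE}(\spadesuit^x)\bigr)=\sum_{x\in X}p_x\,\text{RE}_2(\spadesuit^x),
\]
where the last equality is just the definition \eqref{REDEFX77} applied to each $\spadesuit^x$. I do not expect a real obstacle here: the statement is the additive superposition of the already-proved convex linearity of $\text{CE}$ and the convex linearity of $\text{RE}$ supplied by the Baez and Fritz theorem. The only step requiring genuine care is confirming that the top $1$-morphism of the convex sum coincides with the convex combination of the individual top $1$-morphisms, so that the classical convex-linearity result applies verbatim.
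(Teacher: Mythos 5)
Your proposal is correct and follows essentially the same route as the paper's own proof: both decompose $\text{RE}_2$ via its definition \eqref{REDEFX77}, apply Theorem~\ref{BFRE} for the convex linearity of $\text{RE}$ on the top $1$-morphisms and Proposition~\ref{CECL979} for that of $\text{CE}$, and then regroup the sums. The only difference is that you make explicit the bookkeeping identification of the top $1$-morphism of $\bigoplus_{x\in X}p_x\spadesuit^x$ with $\bigoplus_{x\in X}p_x(\mu^x,q^x,s^x)$, which the paper leaves implicit in its diagram.
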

\begin{proof}
Suppose $\bigoplus_{x\in X}p_x\spadesuit^x$ is summarized by the following diagram
\[
\xy0;/r.25pc/:
(0,7.5)*+{\star}="1";
(-20,-7.5)*+{\coprod_{x\in X}U_x}="2";
(20,-7.5)*+{\coprod_{x\in X}V_x}="0";
(-20,-27.5)*+{\coprod_{x\in X}U'_x}="3";
(20,-27.5)*+{\coprod_{x\in X}V'_x}="4";
(0,-42.5)*+{\star}="5";
{\ar@{~>}"1";"2"_{\bigoplus_{x\in X} p_xq^x}};
{\ar@{~>}"1";"0"^{\bigoplus_{x\in X} p_xr^x}};
{\ar@{~>}"2";"0"_{\bigoplus_{x\in X} f_x}};
{\ar@{~>}"3";"4"_{\bigoplus_{x\in X} f'_x}};
{\ar"2";"3"^{\bigoplus_{x\in X} \mu^x}};
{\ar"0";"4"_{\bigoplus_{x\in X} \nu^x}};
{\ar@{~>}@/_1.5pc/"4";"0"_{\bigoplus_{x\in X} t^x.}};
{\ar@{~>}@/^1.5pc/"3";"2"^{\bigoplus_{x\in X} s^x}};
{\ar@{~>}"5";"3"^{{\bigoplus_{x\in X} p_x{q'}^x}}};
{\ar@{~>}"5";"4"_{{\bigoplus_{x\in X} p_x{r'}^x}}};
\endxy
\]
By Theorem~\ref{BFRE} we know that the relative entropy RE is convex linear over 1-morphisms in $\FinStat_2$, and by Propoosition~\ref{CECL979} we know conditional relative entropy is convex linear over 2-morphisms in $\FinStat_2$, thus
\begin{equation}\label{TC771}
\text{RE}\left(\bigoplus_{x\in X}p_x(\mu^x,q^x,s^x)\right)=\sum_{x\in X}p_x\text{RE}(\mu^x,q^x,s^x) \quad \& \quad \text{CE}\left(\bigoplus_{x\in X}p_x\spadesuit^x\right)=\sum_{x\in X}p_x\text{CE}(\spadesuit^x).
\end{equation}
We then have
\begin{eqnarray*}
\text{RE}_2\left(\bigoplus_{x\in X}p_x\spadesuit^x\right)&=&\text{RE}\left(\bigoplus_{x\in X}p_x(\mu^x,q^x,s^x)\right)+\text{CE}\left(\bigoplus_{x\in X}p_x\spadesuit^x\right) \\
&\overset{\eqref{TC771}}=&\sum_{x\in X}p_x\text{RE}(\mu^x,q^x,s^x)+\sum_{x\in X}p_x\text{CE}(\spadesuit^x) \\
&=&\sum_{x\in X}p_x\left(\text{RE}(\mu^x,q^x,s^x)+\text{CE}\left(\spadesuit^x\right)\right) \\
&=&\sum_{x\in X}p_x\text{RE}_2(\spadesuit^x), \\
\end{eqnarray*}
as desired.
\end{proof}

\begin{theorem}\label{REFV19}
Relative entropy is functorial with respect to vertical composition, i.e., if $\clubsuit\circ \spadesuit$ is a vertical composition in $\bold{FinStat}_2$, then $\emph{RE}_2(\clubsuit\circ \spadesuit)=\emph{RE}_2(\clubsuit)+\emph{RE}_2(\spadesuit)$.
\end{theorem}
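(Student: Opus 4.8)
The plan is to reduce the statement to the two additivity results already established: the Baez--Fritz functoriality of $\text{RE}$ over $1$-morphisms (Theorem~\ref{BFRE}) and the vertical functoriality of $\text{CE}$ (Theorem~\ref{CEFV19}). By the defining formula \eqref{REDEFX77}, every $2$-morphism $\spadesuit:(\mu,p,s)\Rightarrow(\nu,q,t)$ satisfies $\text{RE}_2(\spadesuit)=\text{RE}(\mu,p,s)+\text{CE}(\spadesuit)$, so $\text{RE}_2$ splits as a sum of an ``$\text{RE}$-part'' depending only on the source $1$-morphism and a ``$\text{CE}$-part''. It therefore suffices to show each part is additive under vertical composition and then regroup the four resulting terms.

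First I would unwind the vertical composite. For $\spadesuit:(\mu,p,s)\Rightarrow(\nu,q,t)$ and $\clubsuit:(\mu',p',s')\Rightarrow(\nu',q',t')$, the composite $\clubsuit\circ\spadesuit$ has, by construction, source $1$-morphism $(\mu'\circ\mu,\,p,\,s\circ s')$. The key observation is that this is exactly the $\FinStat$-composition of the two source $1$-morphisms $(\mu,p,s)$ and $(\mu',p',s')$: since $p'=\mu\circ p$ these are composable morphisms $(X,p)\to(X',p')\to(X'',p'')$ in $\FinStat$, and their composite is $(\mu'\circ\mu,p,s\circ s')$ by the $\FinStat$ composition rule. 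Applying the functoriality clause of Theorem~\ref{BFRE} to $\text{RE}$ then gives
\[
\text{RE}(\mu'\circ\mu,\,p,\,s\circ s')=\text{RE}(\mu,p,s)+\text{RE}(\mu',p',s').
\]

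Next I would invoke Theorem~\ref{CEFV19} verbatim to obtain $\text{CE}(\clubsuit\circ\spadesuit)=\text{CE}(\clubsuit)+\text{CE}(\spadesuit)$. Adding the two displayed identities and rearranging the four summands yields
\[
\text{RE}_2(\clubsuit\circ\spadesuit)=\big(\text{RE}(\mu,p,s)+\text{CE}(\spadesuit)\big)+\big(\text{RE}(\mu',p',s')+\text{CE}(\clubsuit)\big)=\text{RE}_2(\clubsuit)+\text{RE}_2(\spadesuit),
\]
which is the assertion.

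There is no genuine analytic obstacle remaining: the only computational content — the telescoping of the logarithms via the pure-channel and stochastic-section identities of Lemma~\ref{LEV77} — has already been absorbed into the proof of Theorem~\ref{CEFV19}. The single point demanding care is the bookkeeping in the first step, namely confirming that the source $1$-morphism of $\clubsuit\circ\spadesuit$ coincides with the $\FinStat$-composite of the two source $1$-morphisms (so that Theorem~\ref{BFRE} applies unchanged) and that the intermediate prior $\mu\circ p$ in the Baez--Fritz formula agrees with the prior $p'$ on $X'$. Both are immediate from the definition of vertical composition, so once recorded the theorem follows purely formally.
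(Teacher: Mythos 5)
Your proof is correct and follows essentially the same route as the paper's: split $\mathrm{RE}_2$ via its defining formula \eqref{REDEFX77}, apply the functoriality clause of Theorem~\ref{BFRE} to the source $1$-morphisms (whose $\FinStat$-composite is precisely the source of $\clubsuit\circ\spadesuit$), apply Theorem~\ref{CEFV19} to the CE-part, and regroup. Your extra remark verifying that the intermediate prior $\mu\circ p$ equals $p'$ is a bookkeeping point the paper leaves implicit, but the argument is the same.
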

\begin{proof}
Suppose $\spadesuit$ and $\clubsuit$ are such that the vertical composition $\clubsuit\circ \spadesuit$ is summarized by the following diagram.
\[
\underline{\clubsuit\circ \spadesuit:(\mu'\circ\mu,p,s\circ s')\Rightarrow (\nu'\circ \nu,q,t\circ t')}
\xy0;/r.25pc/:
(0,7.5)*+{\star}="1";
(-12.5,-7.5)*+{X}="2";
(12.5,-7.5)*+{Y}="0";
(-12.5,-22.5)*+{X''}="3";
(12.5,-22.5)*+{Y''}="4";
(0,-37.5)*+{\star}="5";
{\ar@{~>}"1";"2"_{p}};
{\ar@{~>}"1";"0"^{q}};
{\ar@{~>}"2";"0"_{f}};
{\ar@{~>}"3";"4"_{f''}};
{\ar"2";"3"^{\mu'\circ \mu}};
{\ar"0";"4"_{\nu'\circ \nu}};
{\ar@{~>}@/_1.5pc/"4";"0"_{t\circ t'}};
{\ar@{~>}@/^1.5pc/"3";"2"^{s\circ s'}};
{\ar@{~>}"5";"3"^{p''}};
{\ar@{~>}"5";"4"_{q''}};
\endxy
\]
Then
\begin{eqnarray*}
\text{RE}_2(\clubsuit\circ \spadesuit)&=&\text{RE}(\mu'\circ \mu,p,s\circ s')+\text{CE}(\clubsuit\circ \spadesuit) \\
&=&\text{RE}(\mu,p,s)+\text{RE}(\mu',p',s')+\text{CE}(\clubsuit)+\text{CE}(\spadesuit) \\
&=&\text{RE}_2(\clubsuit)+\text{RE}_2(\spadesuit),
\end{eqnarray*}
where the second equality follows from Theorem~\ref{BFRE} and Theorem~\ref{CEFV19}. 
\end{proof}

\bn
Let $\spadesuit:(\mu,p,s) \Rightarrow (\nu,q,t)$ be a 2-morphism in $\FinStat$, and suppose $s$ and $t$ are optimal hypotheses for $(\mu|p)$ and $(\nu|q)$ as (defined in Definition~\ref{OH19}). Then $\emph{RE}_2(\spadesuit)=0$.
\en

\bprf
Since $s$ and $t$ are optimal hypotheses it follows that $\text{RE}(\mu,p,s)=\text{CE}(\spadesuit)=0$, from which the proposition follows. 
\eprf

\bn
The 2-relative entropy $\emph{RE}_2$ is lower semicontinuous. 
\en

\bprf
Since the 2-relative entropy $\text{RE}_2$ is a linear combination of 1-level relative entropies, and 1-level relative entropies are lower semicontinuous by Theorem~\ref{BFRE}, it follows that $\text{RE}_2$ is lower semicontinuous.
\eprf

%%%%%%%%%%%%%%%%%%
\section{Conclusions}
%%%%%%%%%%%%%%%%%%

In this work we have constructed a 2-categorical extension $\text{RE}_2$ of the relative entropy functor RE of Baez and Fritz \cite{BFRE}, yielding a new measure of information which we view as a relative measure of information between noisy channels. Moreover, we show that our construction satisfies natural 2-level analogues of functoriality, convex linearity, vanishing under optimal hypotheses and lower semicontinuity.  As the relative entropy functor of Baez and Fritz is uniquely characterized by such properties, it is only natural to question if our 2-level extension $\text{RE}_2$ of RE is also uniquely characterized by the 2-level analogues of such properties. We leave such investigation to future work.

%%%%%%%%BIBLIOGRAPHY%%%%%%%%%%%%
\addcontentsline{toc}{section}{\numberline{}Bibliography}
\bibliographystyle{plain}
\bibliography{2RE_ENT}

\end{document}